\newtheorem{theorem}{Theorem}[section]
\newtheorem{corollary}[theorem]{Corollary}
\newtheorem{proposition}[theorem]{Proposition}
\theoremstyle{definition}
\newtheorem{definition}[theorem]{Definition}
\theoremstyle{remark}
\newtheorem{remark}[theorem]{Remark}
\numberwithin{equation}{section}
\newcommand{\A}{\mathfrak{A}}
\newcommand{\HI}{\mathfrak{H}}
\newcommand{\R}{\mathbb{R}}
\newcommand{\As}{\mathfrak{A}_s}
\newcommand{\af}{\mathfrak{a}}
\newcommand{\ns}{\mathfrak{n-s}}
\begin{document}
\title[On the dimensions of oscillator-like algebras]{On the dimensions of oscillator-like algebras induced by orthogonal polynomials: NON-SYMMETRIC CASE}
\author{G. Honnouvo$^1$ and K. Thirulogasanthar$^2$}
\address{$^1$ Department of Mathematics and Statistics, McGill University,  
   805 Sherbrooke Street West, Montreal, Quebec H3A 2K6, Canada }
\email{g\_honnouvo@yahoo.fr}
\address{$^2$Department of Computer Science and Software Engineering, Concordia University,  1455 de Maisonneuve Blvd. West, Montreal, Quebec, H3G 1M8, Canada }
\email{santhar@gmail.com}
\subjclass{Primary 33C45, 33C80, 33D80}
\date{\today}
\keywords{Orthogonal polynomials, oscillator-like algebras, deformed oscillator algebras.}
\begin{abstract}
There is a generalized oscillator-like algebra associated with every class of orthogonal polynomials, on the real line, satisfying a four term non-symmetric recurrence relation.
This note presents necessary and sufficient conditions, on the coefficients of the recurrence relation, for such algebras to be of finite dimension. As examples, we discuss the dimensions of oscillator-like algebras associated with Laguerre and Jacobi polynomials.
\end{abstract}
\maketitle
\pagestyle{myheadings}
\section{Introduction}
The usual harmonic oscillator annihilation , creation and the number operators are defined respectively as
\begin{equation}\label{har}
\af\Psi_n=\sqrt{n}\Psi_{n-1}; n\geq 1,\quad \af^{\dagger}\Psi_n=\sqrt{n+1}\Psi_{n+1}; n\geq 0,\quad N=\af^{\dagger}\af
\end{equation}
and $\af\Psi_0=0$, where $\{\Psi_n\}_{n=0}^{\infty}$ is an orthonormal basis of the harmonic oscillator Fock space. In this case 
$$[\af, \af^{\dagger}]=I,\quad [N,\af]=-\af,\quad [N,\af^{\dagger}]=\af^{\dagger},\quad (\af^{\dagger})^{\dagger}=\af,\quad N^{\dagger}=N$$
and the algebra generated by $\{I, \af, \af^{\dagger}, N\}$ is the usual Weyl-Heisenberg algebra. We call this algebra $\mathfrak{A}_{WH}$. It is well-known that the dimension of this algebra is four.

Several generalizations and deformations of the algebra $\mathfrak{A}_{WH}$ have been studied in the literature, for example, \cite{Bec1, Bec2, key3, B, qd, Bur, BH, FV}. In generalizing or deforming the algebra $\mathfrak{A}_{WH}$ we inclined to stay as close as to the commutation relations of the algebra $\mathfrak{A}_{WH}$. In the following we shall provide conditions, in terms of the coefficients of some recurrence relations satisfied by the Fock basis of generalized oscillator-like algebras, for such algebras  to be of the same dimension as the algebra $\mathfrak{A}_{WH}$	.

In particular, in the following we shall discuss the dimensions of generalized oscillator algebras presented in \cite{B} and the dimensions of a modified version of an oscillator-like algebra presented in \cite{Bec1}.

In a recent paper, \cite{Hon}, we have considered the dimension of generalized oscillator algebras associated with orthogonal polynomials, on the real line, that are orthogonal with respect to a symmetric probability measure.

Let $\HI=L^2(\R, d\mu),$ where $\mu$ is a probability measure on $\R$ with finite moments
\begin{equation}\label{E1}
\mu_n=\int_{-\infty}^{\infty}x^nd\mu(x).
\end{equation}
These moments uniquely define the real sequences $\{a_n\}_{n=0}^{\infty},~\{b_n\}_{n=0}^{\infty}$ and the system of orthogonal polynomials $\{\Psi_n(x)\}_{n=0}^{\infty}$ satisfying the recurrence relation \cite{B}
\begin{equation}\label{E2}
x\Psi_n(x)=b_n\Psi_{n+1}(x)+a_n\Psi_n(x)+b_{n-1}\Psi_{n-1}(x),~~ \Psi_0(x)=1,~b_{-1}=0;~~~n=0,1,2,\cdots.
\end{equation}
The polynomials (normalized) $\{\Psi_n(x)\}_{n=0}^{\infty}$ form an orthonormal basis for a Fock space associated with a generalized oscillator algebra provided that $b_n$'s and $\mu_n$'s are connected by a specific  relation \cite{B}. There are two cases associated with (\ref{E2}) \cite{B, BD2, BD3}:
\begin{enumerate}
\item[(i)] $a_n=0$, symmetric case
\item[(ii)] $a_n\not=0,$ non-symmetric case
\end{enumerate}
The primary aim of this article is to investigate the dimension of an oscillator-like algebra obeying the recurrence relation (\ref{E2}).  We shall provide necessary and sufficient conditions, in terms of $a_n$ and $b_n$ of (\ref{E2}), for such an oscillator-like algebra to be of finite dimension. This result, in a manner, can be viewed as a dimension wise classification for such algebras.

The rest of the article is organized as follows. In subsection 2.1 we briefly discuss the symmetric case. In particular we shall respond to the claims made in \cite{BD2, BD3} about the results of our earlier paper \cite{Hon}. In section 2.2 we discuss the non-symmetric case and also comment on the results provided in \cite{BD2,BD3} about the oscillator algebra associated with the non-symmetric case. Subsection 2.3 deals with oscillator-like algebras obeying the recurrence relation (\ref{E2}). In section 3 we present the main results of this manuscript. That is, we present a necessary and sufficient condition on $a_n$ and $b_n$ of (\ref{E2}) for oscillator-like algebras to be of finite dimension.
Some examples accommodating our claim are presented in section 4. Section 5 ends the manuscript with a conclusion.

\section{Classes of generalized oscillator and oscillator-like algebras}
In this section we shall provide a class of generalized oscillator and oscillator-like algebras based on \cite{B, Bec1, Bec2}. In particular we shall respond to the claims made in \cite{BD2, BD3} about our earlier paper \cite{Hon}.
\subsection{Symmetric case}
 Let $\mu$ be a symmetric probability measure on the real line, $\mathbb{R}$. That is, the measure $\mu$ satisfies
\begin{equation}\label{E3}
\int_{-\infty}^{\infty} \mu(dx)=1,\quad \text{and}\quad \mu_{2k+1}=\int_{-\infty}^{\infty}x^{2k+1} \mu(dx)=0;\quad k=0,1,...
\end{equation}
Let 
\begin{equation}\label{sb}
\{ b_n \}_{n=0}^\infty,\quad b_n>0; \quad n=0, 1,... 
\end{equation}
be a positive sequence defined by the algebraic equations system
\begin{equation}\label{a-1}
\sum_{m=0}^{[\frac{1}{2}n]}\sum_{s=0}^{[\frac{1}{2}n]}(-1)^{m+s}\alpha_{2m-1,n-1}\alpha_{2s-1,n-1}\frac{\mu_{2n-2m-2s+2}}{(b_{n-1}^2)!}= b_{n-1}^2 +b_{n}^2;\quad n=0,1,2,...,
\end{equation}
where $(b_{n-1}^2)!= b_{0}^2b_{1}^2...b_{n-1}^2,$ the integral part of $a$ is denoted by $[a]$, and the coefficients $\alpha_{ij}$ are given by
\begin{eqnarray}
\alpha_{2p-1,n-1}= \sum_{k_1=2p-1}^{n-1}b_{k_1}^2\sum_{k_2=2p-3}^{k_1-2}b_{k_2}^2
...\sum_{k_p=1}^{k_{p-1}-2}b_{k_p}^2.
\end{eqnarray}
In (2.3) and (2.4), for $k=0,1,\cdots$, $\mu_{2k}$ are known and it determine the sequence $\{b_n\}$ and the sequence $\{b_n\}$ determines $\alpha_{n,m}$ through (2.4). In order to get the orthogonality of the polynomials, $\Psi_n$, in (\ref{Re}), $\mu_{2k}$ and $b_n$ have to be related by the relation (\ref{a-1}) (see Theorem \ref{day}).

Let us consider a system $\{\Psi_n(x) \}_{n=0}^\infty$ of polynomials defined by the recurrence relations $(n\geq 0):$
\begin{equation}\label{Re}
x\Psi_n(x)=b_n\Psi_{n+1}(x)+ b_{n-1}\Psi_{n-1}(x),\quad \Psi_0(x)=1,\quad b_{-1}=0,
\end{equation}
where $\{b_n \}_{n=0}^\infty$ is a given positive sequence satisfying the relation $(\ref{a-1})$.
The following theorem was proved in \cite{B}.
\begin{theorem}\label{day}
The polynomial system $\{\Psi_n(x) \}_{n=0}^\infty$ is orthonormal in the Hilbert space $\mathfrak{H}$ if and only if the coefficients $b_n$ and the moments $\mu_{2k}$ are connected by relation (\ref{a-1}).
\end{theorem}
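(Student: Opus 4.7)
My plan is to proceed by induction on $n$ and exploit the three-term recurrence (\ref{Re}) in two complementary ways: once to produce an explicit polynomial expansion of $\Psi_n$, and once to convert the norm condition $\|\Psi_{n+1}\|^2=1$ into a moment identity. The base cases are immediate: $\|\Psi_0\|^2=\mu_0=1$, while $\Psi_1=x/b_0$ yields $\|\Psi_1\|^2=\mu_2/b_0^2$, which equals $1$ precisely when $\mu_2=b_0^2$, the $n=0$ instance of (\ref{a-1}). For the inductive step, assume $\Psi_0,\ldots,\Psi_n$ are orthonormal; then the orthogonality of $\Psi_{n+1}$ to each $\Psi_k$ with $k\le n$ comes for free from the recurrence and the hypothesis: for $k<n-1$ one expands $\langle x\Psi_n,\Psi_k\rangle=\langle\Psi_n,x\Psi_k\rangle=0$; for $k=n-1$ the identity $x\Psi_{n-1}=b_{n-1}\Psi_n+b_{n-2}\Psi_{n-2}$ produces exact cancellation after using the recurrence for $\Psi_{n+1}$; and for $k=n$ a parity argument using the symmetry of $\mu$ together with the opposite parities of $\Psi_n$ and $\Psi_{n+1}$ finishes the job.

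The real content is therefore the single identity $\|\Psi_{n+1}\|^2=1$. Taking $\HI$-norms on both sides of $b_n\Psi_{n+1}=x\Psi_n-b_{n-1}\Psi_{n-1}$ and using the inductive orthonormality of $\Psi_{n-1}$ reduces this to
$$
\int_{-\infty}^{\infty}x^{2}\Psi_n(x)^{2}\,d\mu(x)=b_{n-1}^{2}+b_n^{2}.
$$
Next, I would prove by a secondary induction that iterating (\ref{Re}) yields the explicit expansion
$$
\Psi_n(x)=\frac{1}{\sqrt{(b_{n-1}^{2})!}}\sum_{m=0}^{[n/2]}(-1)^{m}\alpha_{2m-1,n-1}\,x^{n-2m},
$$
where $\alpha_{-1,n-1}:=1$ by convention and the remaining $\alpha_{2m-1,n-1}$ are the nested sums from the statement. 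Substituting this expansion into the displayed norm identity and reading off the moment $\mu_{2n-2m-2s+2}$ produced by each cross product $x^{n-2m}\cdot x^{n-2s}\cdot x^{2}$ delivers precisely (\ref{a-1}). Because every step of the induction is a biconditional, both directions of the ``if and only if'' are established simultaneously.

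The main obstacle is the combinatorial bookkeeping for $\alpha_{2p-1,n-1}$: one must verify that multiplying the explicit expansion of $\Psi_n$ by $x$, subtracting $b_{n-1}\Psi_{n-1}$, and dividing by $b_n$ reproduces the same nested-sum pattern at level $n+1$. Once this algebraic identity is secured, the remaining step---pairing the signed products $(-1)^{m+s}\alpha_{2m-1,n-1}\alpha_{2s-1,n-1}$ against the appropriate even moments and dividing by $(b_{n-1}^{2})!$---is a routine calculation that matches the right-hand side $b_{n-1}^{2}+b_n^{2}$.
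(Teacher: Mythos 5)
Your overall strategy is sound, and in fact it is essentially the standard argument behind this result (which the paper itself does not prove -- it is quoted from Borzov \cite{B}): induction on $n$, orthogonality of $\Psi_{n+1}$ to $\Psi_0,\dots,\Psi_n$ coming for free from the recurrence, symmetry of $\mu$ and the inductive hypothesis, so that the only nontrivial condition is $\|\Psi_{n+1}\|^2=1$, which you correctly reduce to $\int x^2\Psi_n^2\,d\mu=b_{n-1}^2+b_n^2$; expanding $\Psi_n$ in powers of $x$ then turns this into a moment identity of exactly the shape of (\ref{a-1}), and since each inductive step is an equivalence, both directions follow.

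The genuine gap is in the step you dismiss as ``routine bookkeeping.'' With the recurrence normalized as in (\ref{Re}), i.e. $x\Psi_n=b_n\Psi_{n+1}+b_{n-1}\Psi_{n-1}$, the monic polynomials $P_n=\sqrt{(b_{n-1}^2)!}\,\Psi_n$ satisfy $P_{n+1}=xP_n-b_{n-1}^2P_{n-1}$, and their coefficients are \emph{not} the nested sums $\alpha_{2p-1,n-1}$ as printed in the paper: already $P_2=x^2-b_0^2$, whereas $\alpha_{1,1}=\sum_{k_1=1}^{1}b_{k_1}^2=b_1^2$; similarly $P_4=x^4-(b_0^2+b_1^2+b_2^2)x^2+b_0^2b_2^2$ while $\alpha_{1,3}=b_1^2+b_2^2+b_3^2$ and $\alpha_{3,3}=b_1^2b_3^2$. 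The correct coefficients are the same nested sums with every $b_{k_i}^2$ replaced by $b_{k_i-1}^2$ (equivalently, the displayed $\alpha$'s correspond to the shifted convention $x\Psi_n=b_{n+1}\Psi_{n+1}+b_n\Psi_{n-1}$ used in \cite{B}). So if you carry out the secondary induction you propose, the level-$(n+1)$ pattern you obtain will not literally reproduce the $\alpha_{2p-1,n-1}$ of the statement, and the final identification with (\ref{a-1}) as printed fails at $n=2$. To close the proof you must either establish the expansion with the index-shifted coefficients and read (\ref{a-1}) in that convention, or explicitly reconcile the two normalizations; everything else in your outline (the base cases, the orthogonality-for-free argument, the parity argument, and the reduction of $\|\Psi_{n+1}\|^2=1$ to the moment identity) is correct as stated.
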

Let $\{\Psi_n(x) \}_{n=0}^\infty$ be an orthonormal basis of the Fock space $\mathcal{H}_s$ which satisfies the recurrence relation (\ref{Re}). That is,
$$\mathcal{H}_s=\overline{\text{span}}\left\{ \Psi_n(x)~|~n=0,1,2,...\right\}\subseteq\mathfrak{H},$$
where the bar stands for the closure of the linear span taken under the norm topology of $\mathfrak{H}$. Define the ladder operators $\af_s^\dagger$, $\af_s$ and the number operator $N_s$ in the Fock space, $\mathcal{H}_s$, by the usual formulas:
\begin{eqnarray}\label{b-1}
\af_s^\dagger\Psi_n(x)&=&\sqrt{2}b_n \Psi_{n+1}(x),\nonumber\\ \af_s\Psi_n(x)&=&\sqrt{2}b_{n-1} \Psi_{n-1}(x),\\
 N_s\Psi_n(x)&=&n\Psi_n(x).\nonumber
 \end{eqnarray}
It can be readily seen that $(\af_s^{\dagger})^{\dagger}=\af_s$. The polynomial set$\{\Psi_n(x) \}_{n=0}^\infty$ is called a canonical polynomial system when it is defined by the recurrence relation (\ref{Re}). The canonical polynomial system $\{\Psi_n(x) \}_{n=0}^\infty$ is uniquely determined by the symmetric probability measure $\mu$.
Now, as usual, let the position operator be
\begin{equation}\label{P}
Q_s=\frac{\af_s+\af_s^{\dagger}}{\sqrt{2}}.
\end{equation}
In order to guarantee
\begin{equation}\label{pos}
Q_s\Psi_n(x)=x\Psi_n(x)
\end{equation}
the symmetry of the measure is required  and the relation (\ref{pos}) is essential for the three term recursion relation (\ref{Re}). In fact, the relation (\ref{pos}) provides the connection between the operators $\af_s, \af_s^{\dagger}$ and the recurrence relation (\ref{Re}) \cite{B, Hon}. 
The operator $N_s$ is self adjoint in the Fock space. Therefore for any Borel function $B$, through the spectral theorem \cite{Oli}, one can define the operator $B(N_s)$. In this regard, we take a function
 $B(N_s)$ of operator $N_s$ in the space $\mathcal{H}_s$ which acts on the basis vectors, $\{\Psi_n(x) \}_{n=0}^\infty$ as 
\begin{equation}\label{b3}
B(N_s)\Psi_n(x)=b_{n-1}^2\Psi_n(x),\quad\text{and}\quad B(N_s+I_s)\Psi_n(x)=b_n^2\Psi_n(x);\quad n\geq 0,
\end{equation}
where $I_s$ is the identity operator on $\mathcal{H}_s$. The following result is proved in \cite{B, BD}:
\begin{theorem}\label{Th1}
The operators $\af_s, \af_s^{\dagger}$ and $N_s$ obey the following commutation relations
\begin{equation}\label{Com}
[\af_s, \af_s^\dagger]= 2\left(B(N_s+I_s)-B(N_s)\right), \quad [N_s, \af_s^\dagger]=\af_s^\dagger,\quad [N_s, \af_s]=-\af_s.
\end{equation}
\end{theorem}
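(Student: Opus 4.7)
The plan is to verify each of the three commutation relations by direct computation on the orthonormal Fock basis $\{\Psi_n(x)\}_{n=0}^{\infty}$ and then extend by linearity and closure to the appropriate dense domain in $\mathcal{H}_s$. Since $N_s$, $\af_s$, $\af_s^{\dagger}$, and $B(N_s)$ are all defined explicitly through their action on the basis vectors, each commutator reduces to a short algebraic manipulation involving the sequence $\{b_n\}$.

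For the first relation, I would compute
\begin{equation*}
\af_s\af_s^{\dagger}\Psi_n(x) = \af_s\bigl(\sqrt{2}\,b_n\Psi_{n+1}(x)\bigr) = 2b_n^{2}\Psi_n(x),
\end{equation*}
and similarly $\af_s^{\dagger}\af_s\Psi_n(x) = 2b_{n-1}^{2}\Psi_n(x)$, so that $[\af_s,\af_s^{\dagger}]\Psi_n(x) = 2(b_n^{2}-b_{n-1}^{2})\Psi_n(x)$. Invoking the definition \eqref{b3} of $B(N_s)$ and $B(N_s+I_s)$, the right-hand side is precisely $2\bigl(B(N_s+I_s)-B(N_s)\bigr)\Psi_n(x)$, which gives the first identity on the basis and hence on its linear span.

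For the remaining two relations, a parallel computation works: since $N_s\Psi_{n+1}(x) = (n+1)\Psi_{n+1}(x)$ and $N_s\Psi_n(x) = n\Psi_n(x)$,
\begin{equation*}
[N_s,\af_s^{\dagger}]\Psi_n(x) = \sqrt{2}\,b_n\bigl((n+1)-n\bigr)\Psi_{n+1}(x) = \af_s^{\dagger}\Psi_n(x),
\end{equation*}
and analogously $[N_s,\af_s]\Psi_n(x) = -\af_s\Psi_n(x)$.

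The only genuinely delicate point is domain-theoretic rather than computational: the equalities above hold strictly on the dense subspace generated by the $\Psi_n(x)$, and one must note that both sides of each commutator are well-defined there, with $B(N_s)$ made sense of via the spectral theorem applied to the self-adjoint operator $N_s$ (as indicated in the paragraph preceding \eqref{b3}). Once the identities are established on the common invariant dense domain $\mathrm{span}\{\Psi_n(x)\}$, they extend to the appropriate operator identities on $\mathcal{H}_s$. This closure/domain bookkeeping, rather than the arithmetic, is the only mild obstacle.
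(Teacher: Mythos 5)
Your proposal is correct: the componentwise verification $\af_s\af_s^\dagger\Psi_n=2b_n^2\Psi_n$, $\af_s^\dagger\af_s\Psi_n=2b_{n-1}^2\Psi_n$ (with $b_{-1}=0$ covering $n=0$), together with the analogous computations for $[N_s,\af_s^\dagger]$ and $[N_s,\af_s]$ on the invariant dense span of the basis, is exactly the standard argument. The paper itself does not prove this theorem but cites \cite{B, BD} for it, and your direct calculation matches the definitions (\ref{b-1}) and (\ref{b3}) as intended, so there is nothing further to add.
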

\begin{definition}\label{DD1}
An algebra $\mathcal A_s$ is called a generalized oscillator algebra corresponding to the orthonormal system $\{\Psi_n(x) \}_{n=0}^\infty$, which satisfies (\ref{Re}), if $\mathcal A_s$ is generated by the operators $\af_s^\dagger$, $\af_s$, $N_s$, and $I_s$ and by their commutators. These operators should also satisfy the relations (\ref{b-1}) and (\ref{Com}).  
\end{definition}

At this point we like to emphasize a word about the above definition.  The algebra $\mathfrak{A}_s$ consists the operators $\af_s, \af_s^{\dagger}, N_s$ and $I_s$ and their repeated commutators only.  Further, the above oscillator algebra arises only in the symmetric case \cite{B, Hon}. Further, the algebra $\mathfrak{A}_s$ may be considered as a generalization of the algebra $\mathfrak{A}_{WH}$.

Regarding the dimension of the algebra $\mathfrak{A}_s$ we have proved the following result in \cite{Hon}.
\begin{theorem}\label{Th2}
The generalized oscillator algebra $\mathfrak{A}_s$ is of finite dimension if and only if
\begin{equation}\label{bn}
b_n^2=\alpha_0+\alpha_1n+\alpha_2n^2,\quad b_{-1}=0,\quad \alpha_0, \alpha_1, \alpha_2\in\R 
\end{equation}
and in this case the dimension of the algebra is four.
\end{theorem}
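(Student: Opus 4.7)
My strategy is to analyze $\mathcal{A}_s$ through the single scalar function $g(n):=b_n^{2}-b_{n-1}^{2}$, since Theorem~\ref{Th1} identifies $[\af_s,\af_s^{\dagger}]=2g(N_s)$ while the only other commutation relations among the generators are $[N_s,\af_s^{\dagger}]=\af_s^{\dagger}$ and $[N_s,\af_s]=-\af_s$. The dimension of $\mathcal{A}_s$ is thus governed by how rapidly iterated brackets of $g(N_s)$ with $\af_s$ and $\af_s^{\dagger}$ produce new operators.

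For sufficiency, the condition $b_n^{2}=a_0+a_1n+a_2n^{2}$ combined with $b_{-1}=0$ forces $a_0=a_1-a_2$, so that $g(n)=(a_1-a_2)+2a_2n$ is affine in $n$ and $g(N_s)=(a_1-a_2)I_s+2a_2N_s$ lies in $\mathrm{span}\{I_s,N_s\}$. Hence every bracket among $\{I_s,N_s,\af_s,\af_s^{\dagger}\}$ stays inside this four-dimensional subspace, and a quick linear-independence check using the values of these operators on $\Psi_0$ and $\Psi_1$ gives $\dim\mathcal{A}_s=4$.

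For necessity, assume $b_n^{2}$ is not of the form $a_0+a_1n+a_2n^{2}$, so $g$ is not affine. The main tool is the iterated bracket
\[
T_k\;:=\;\underbrace{[\af_s^{\dagger},[\af_s^{\dagger},\ldots[\af_s^{\dagger}}_{k\text{ brackets}},[\af_s,\af_s^{\dagger}]]\ldots]],
\]
for which the shift identity $p(N_s)\af_s^{\dagger}=\af_s^{\dagger}p(N_s+I_s)$ (consequence of $[N_s,\af_s^{\dagger}]=\af_s^{\dagger}$) gives by induction $T_k=(-1)^{k}2(\af_s^{\dagger})^{k}(\Delta^{k}g)(N_s)$, where $\Delta$ is the forward difference operator. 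Each nonzero $T_k$ is a pure shift-by-$k$ operator on the Fock basis, so distinct nonzero $T_k$'s lie in different eigenspaces of $\mathrm{ad}_{N_s}$ and are automatically linearly independent. If $g$ is not a polynomial in $n$, then $\Delta^{k}g\not\equiv 0$ for all $k$, every $T_k\neq0$, and $\dim\mathcal{A}_s=\infty$.

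The delicate remaining case, where the real obstacle lies, is when $g$ is a polynomial of degree $d\geq 2$: then $T_k=0$ as soon as $k>d$ and the $T_k$'s contribute only finitely many new elements. To handle it I would iterate the map $\Phi(X):=[\af_s,[\af_s^{\dagger},X]]$, which preserves the shift-zero subspace $V_0:=\ker(\mathrm{ad}_{N_s})\cap\mathcal{A}_s$. For $X=p(N_s)$ the auxiliary identities $[\af_s,h(N_s)]=(\Delta h)(N_s)\af_s$ and $\af_s^{\dagger}h(N_s)\af_s=2B(N_s)h(N_s-I_s)$ give
\[
\Phi(p(N_s))=-2g(N_s)(\Delta p)(N_s)-2B(N_s)(\Delta^{2}p)(N_s-I_s).
\]
Because $B$ is an antidifference of $g$, it has degree $d+1$ with leading coefficient $g_d/(d+1)$; comparing leading terms then shows that the top coefficient of $\Phi(p(N_s))$ equals $-2g_d p_e\,e(d+e)/(d+1)$, where $e=\deg p\geq 1$ and $p_e$ is the leading coefficient of $p$. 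This is nonzero, so $\Phi$ raises polynomial degree by exactly $d-1$, which is $\geq 1$ under the hypothesis $d\geq 2$. Iterating $\Phi$ on $p_1(N_s)=N_s$ then produces polynomial-in-$N_s$ elements of $V_0$ of strictly increasing degree, forcing $\dim V_0=\infty$. The nonvanishing of this leading coefficient, together with the fact that $d-1\geq 1$, is what genuinely uses $d\geq 2$, and it is where I expect the proof to need the most care.
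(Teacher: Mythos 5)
Your argument is correct, and it is worth noting that for this particular statement the paper offers no proof at all (Theorem \ref{Th2} is quoted from \cite{Hon}); the closest in-paper argument is the proof of Theorem \ref{t1}, against which your proposal agrees in outline but diverges at the decisive step. The parts that coincide: sufficiency by observing that $b_{-1}=0$ forces $a_1=a_0+a_2$, so $[\af_s,\af_s^{\dagger}]=2\bigl(B(N_s+I_s)-B(N_s)\bigr)$ is affine in $N_s$ and $\mathrm{span}\{I_s,N_s,\af_s,\af_s^{\dagger}\}$ is bracket-closed of dimension four (your reading of $b_{-1}=0$ is exactly the one the authors defend in Remark \ref{Rem1}, and your $T_k$ computation even quantifies the objection of \cite{BD2,BD3}: if $a_0\neq a_1-a_2$ the resulting $\delta_{n,0}$ defect in $g$ makes every higher difference nonzero and the algebra infinite-dimensional); and your iterated brackets $T_k$, which are shift-by-$k$ operators weighted by $k$-th finite differences of $g(n)=b_n^2-b_{n-1}^2$, are the same device as the paper's $D_k$ and $d_n^{(k)}$ in Step-2. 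The genuine difference is the hard case where $g$ is a polynomial of degree $d\ge 2$: the paper, following page 8 of \cite{Hon}, manufactures the elements $(\af_s^{\dagger})^{p+m(p-1)}$, $m\ge 1$, i.e.\ infinitely many independent elements of positive $\mathrm{ad}_{N_s}$-weight, whereas you stay in the zero-weight (diagonal) part and iterate $\Phi(X)=[\af_s,[\af_s^{\dagger},X]]$, using the identities $[\af_s,h(N_s)]=(\Delta h)(N_s)\af_s$ and $\af_s^{\dagger}h(N_s)\af_s=2B(N_s)h(N_s-I_s)$ together with the leading-coefficient value $-2g_dp_e\,e(d+e)/(d+1)\neq 0$ to show each application of $\Phi$ raises the degree in $N_s$ by $d-1\ge 1$. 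I checked these identities and the degree count, and the positivity $b_n>0$ does guarantee that nonvanishing of the difference coefficients is equivalent to nonvanishing of the corresponding operators, so your independence arguments in both the $T_k$ and the $\Phi$-iteration steps are sound. What each route buys: yours is more self-contained (a two-line commutator identity plus a leading-term comparison replaces the multi-stage generation of high powers of $\af_s^{\dagger}$, and the new elements, being polynomials in $N_s$ of distinct degrees, are trivially independent), while the paper's mechanism is the one that is recycled verbatim in Step-2 of the proof of Theorem \ref{t1} for the non-symmetric algebra, where the obstruction is exhibited through powers of the creation operator rather than through an unbounded diagonal subalgebra.
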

\begin{remark}\label{Rem1}
At this point we like to respond to the comments made in \cite{BD2} and \cite{BD3} about our earlier paper \cite {Hon}.  In \cite{BD2} and \cite{BD3} the authors claimed that the sufficient part of the Theorem (\ref{Th2}) is incorrect.  They indicated that for the sufficient part to be true, in addition to (\ref{bn}), the coefficients $\alpha_0, \alpha_1$ and $\alpha_2$ of (\ref{bn}) must also satisfy the relation
\begin{equation}\label{bor}
\alpha_1=\alpha_0+\alpha_2
\end{equation}
However, the relation (\ref{bor}) is indeed included in Theorem (\ref{Th2}). It can be easily seen that in the recurrence relation (\ref{Re}) we have $b_{-1}=0$ and in (\ref{bn}) if $b_{-1}=0$ then we obtain $\alpha_1=\alpha_0+\alpha_2$. In this regard, Theorem (\ref{Th2}) is correct in its own form.
\end{remark}
\subsection{Non-symmetric case}
For a symmetric probability measure a non-symmetric recurrence relation can be transformed to a symmetric one. For details we refer the reader to Section 5 in \cite{B}. Also the following is extracted from \cite{B} as needed here.\\

Let $\mu$ be a probability but not necessarily a symmetric measure on $\R$. Let $\HI_{\ns}=L^2(\R, \mu)$ and
\begin{equation}\label{mom}
\mu_0=1,\quad \mu_k=\int_{-\infty}^{\infty}x^k d\mu(x);\quad k=1,2,\cdots.
\end{equation}
Let the real sequences $\{a_n\}_{n=0}^{\infty}, \{b_n\}_{n=0}^{\infty}$ be solutions of the system
\begin{equation}\label{sys}
\left\{\begin{array}{cc}
&A_{k,n}=b_nA_{k-1,n+1}+a_nA_{k-1,n}+b_{n-1}A_{k-1,n-1};\quad n= 0,1,2,\cdots; b_{-1}=0\\
&A_{0,0}=1,~~ A_{k,0}=\mu_k,~~ A_{0,k}=0;\quad  k\geq 1.
\end{array}\right.
\end{equation}
There is a unique solution to the system (\ref{sys}) with respect to the variables $(a_n, b_n, A_{k,n});~~n\geq 0, k\geq 0$. That is, the initial conditions $A_{0,0}=1, A_{k,0}=\mu_k, A_{0,k}=0$ are given for $k\geq 1$.  Through (2.14) we look for the real sequences $\{b_n\}_{n=0}^{\infty}$ and $\{a_n\}_{n=0}^{\infty}$.\\
If sequences $\{a_n\}_{n=0}^{\infty}, \{b_n\}_{n=0}^{\infty}$ are given, then one can define the polynomial system by the recurrence relation
\begin{equation}\label{nRe}
x\Psi_n(x)=b_n\Psi_{n+1}(x)+a_n\Psi_n(x)+b_{n-1}\Psi_{n-1}(x);~~n\geq 0, b_{-1}=0, \Psi_0(x)=1.
\end{equation}
\begin{theorem}\cite{B}Let $\{\Psi_n(x)\}_{n=0}^{\infty}$ be a real polynomial system defined by (\ref{nRe}) and let $\mu$ be a probability measure on $\R$. The system of polynomials $\{\Psi_n(x)\}_{n=0}^{\infty}$ is orthonormal with respect to the measure $\mu$ on $\R$ if and only if the coefficients $\{a_n\}_{n=0}^{\infty}, \{b_n\}_{n=0}^{\infty}$ involved in (\ref{nRe}) are the solutions of the system (\ref{sys}), where $\mu_k$ are defined by (\ref{mom}).
\end{theorem}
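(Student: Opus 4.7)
The plan is to read the unknowns $A_{k,n}$ in the system (\ref{sys}) as the mixed moments
$A_{k,n}=\int_{\R} x^k \Psi_n(x)\,d\mu(x)$, so that both the recurrence in (\ref{sys}) and the equivalence of the theorem become statements that can be checked by integrating (\ref{nRe}) against suitable powers of $x$. The uniqueness of the solution of (\ref{sys}), stated just before the theorem, will be the bridge connecting the two directions.

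For the forward direction I would assume orthonormality and set $\widetilde{A}_{k,n}:=\int x^k \Psi_n\,d\mu$. Because $\Psi_0\equiv 1$ and $\mu$ has unit mass, $\widetilde{A}_{0,0}=1$ and $\widetilde{A}_{k,0}=\mu_k$; orthogonality with $\Psi_0$ yields $\widetilde{A}_{0,n}=0$ for $n\ge 1$; and multiplying (\ref{nRe}) by $x^{k-1}$ and integrating reproduces exactly the recursion of (\ref{sys}). Thus $(a_n,b_n,\widetilde{A}_{k,n})$ solves (\ref{sys}). For the converse, suppose $(a_n,b_n,A_{k,n})$ is the unique solution of (\ref{sys}) and let $\widetilde{A}_{k,n}$ be defined as above from the polynomials produced by (\ref{nRe}). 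The same integration-against-$x^{k-1}$ computation, together with $\Psi_0\equiv 1$, shows that $\widetilde{A}_{k,n}$ obeys the recursion of (\ref{sys}) with boundary data $\widetilde{A}_{0,0}=1$ and $\widetilde{A}_{k,0}=\mu_k$, so by uniqueness $\widetilde{A}_{k,n}=A_{k,n}$ for all $k,n$.

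To extract orthonormality I would next establish, by induction on $n$ using the recursion of (\ref{sys}) and the vanishing boundary condition $A_{0,n}=0$ for $n\ge 1$, the triangularity $A_{k,n}=0$ for $k<n$ together with the explicit value $A_{n,n}=b_0 b_1\cdots b_{n-1}$. Expanding $\Psi_m=\sum_{k=0}^{m} c_{m,k}x^k$ then gives $\int \Psi_n\Psi_m\,d\mu=\sum_{k=0}^{m}c_{m,k}\widetilde{A}_{k,n}$, which vanishes when $m<n$ (every term has $k\le m<n$) and collapses to $c_{n,n}A_{n,n}=1$ when $m=n$, using the leading coefficient $c_{n,n}=1/(b_0\cdots b_{n-1})$ that one reads off inductively from (\ref{nRe}). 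The one non-mechanical step is precisely this triangularity of $A_{k,n}$: without it, the backward direction would recover only orthogonality against $\Psi_0$ rather than the full orthonormality statement, so verifying it carefully is where the real work of the proof lies.
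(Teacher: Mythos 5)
Your forward direction is sound: reading $A_{k,n}$ as $\int_{\R}x^k\Psi_n\,d\mu$, the recursion in (\ref{sys}) is just (\ref{nRe}) integrated against $x^{k-1}$, the conditions $A_{0,0}=1$ and $A_{k,0}=\mu_k$ are automatic for a probability measure with the moments (\ref{mom}), and orthogonality with $\Psi_0\equiv 1$ supplies $A_{0,n}=0$ for $n\ge 1$. (The paper itself gives no proof of this statement --- it is quoted from \cite{B} --- so this review is on the merits of your argument alone.)

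The converse, however, has a genuine gap at the step ``so by uniqueness $\widetilde{A}_{k,n}=A_{k,n}$.'' Everything you verify about $\widetilde{A}$ at that point --- the recursion, $\widetilde{A}_{0,0}=1$, $\widetilde{A}_{k,0}=\mu_k$ --- holds automatically for an arbitrary choice of real sequences $a_n$ and $b_n\neq 0$ in (\ref{nRe}), so these facts by themselves cannot single out the orthonormal case. The uniqueness statement preceding the theorem concerns triples satisfying \emph{all} of (\ref{sys}), in particular $A_{0,n}=0$ for $n\ge 1$; for $\widetilde{A}$ that condition is precisely the orthogonality of $\Psi_n$ against $\Psi_0$ that you are trying to prove, so it cannot be invoked. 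The gap is repairable within your framework: since $b_n\neq 0$, solve the recursion for the highest polynomial index, $A_{k,n+1}=b_n^{-1}\bigl(A_{k+1,n}-a_nA_{k,n}-b_{n-1}A_{k,n-1}\bigr)$, and induct on $n$ (sweeping over all $k$) with base $\widetilde{A}_{k,0}=\mu_k=A_{k,0}$; both $\widetilde{A}$ and the abstract solution $A$ of (\ref{sys}) obey this, giving $\widetilde{A}_{k,n}=A_{k,n}$ for all $k,n$ without any appeal to uniqueness. Alternatively, the standard route constructs by Gram--Schmidt the orthonormal polynomials of $\mu$, notes via your forward direction that their coefficients and mixed moments solve (\ref{sys}), and then uses uniqueness of the solution of (\ref{sys}) to conclude that $a_n,b_n$ coincide with those coefficients, so the $\Psi_n$ of (\ref{nRe}) are exactly the orthonormal polynomials. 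With either repair, your triangularity claim ($A_{k,n}=0$ for $k<n$, $A_{n,n}=b_0\cdots b_{n-1}$) and the leading-coefficient computation do complete the proof; note, though, that the decisive missing piece is the identification $\widetilde{A}=A$, not the triangularity you flagged as the main difficulty.
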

Let $\{\Psi_n(x)\}_{n=0}^{\infty}$ be an orthonormal system satisfying the four term recurrence relation (\ref{nRe}) and let
$$\mathcal{H}_{\ns}=\overline{\mbox{span}}\{\Psi_n(x)~\vert~n=0,1,2,\cdots\}.$$
On $\mathcal{H}_{\ns}$ define the operators $Q_{\ns}$ and $P_{\ns}$ as
\begin{eqnarray*}
Q_{\ns}\Psi_n(x)&=&b_{n-1}\Psi_{n-1}(x)+a_n\Psi_n(x)+b_n\Psi_{n+1}(x)\\
P_{\ns}\Psi_n(x)&=&i[b_{n-1}\Psi_{n-1}(x)-b_n\Psi_{n+1}(x)]+a_n\Psi_n(x);\quad n\geq 0.
\end{eqnarray*}
Let
\begin{equation}\label{npos}
\widehat{Q}_{\ns}=\mbox{Re}(Q_{\ns}-P_{\ns}),\quad \widehat{P}_{\ns}=-i\mbox{Im}(Q_{\ns}-P_{\ns}).
\end{equation}
Define the ladder operators as
\begin{equation}\label{nlad}
\widehat{\af}_{\ns}=\frac{\widehat{Q}_{\ns}-i\widehat{P}_\ns}{\sqrt{2}},\quad 
\widehat{\af}_{\ns}^{\dagger}=\frac{\widehat{Q}_{\ns}+i\widehat{P}_\ns}{\sqrt{2}}.
\end{equation}
Also take $N_{\ns}=N_s$ and $I_\ns$ to be the identity operator on $\mathcal{H}_\ns$.
Then for the operators $\widehat{\af}_{\ns}, \widehat{\af}_{\ns}^{\dagger}, N_\ns$ and $I_\ns$ the formulas (\ref{b-1}) and Theorem (\ref{Th1}) are valid. Let the algebra generated in this case be $\mathfrak{A}_\ns$. In the non-symmetric case the position operator, $\widehat{Q}_{\ns}$, does not have to be an operator of the multiplication by an independent variable. However, since the orthonormal system $\{\Psi_n(x)\}_{n=0}^{\infty}$ satisfies the recurrence relation (\ref{nRe}), by the definition of $Q_{\ns}$, the operator $Q_{\ns}$ is an operator of multiplication by an independent variable.
\begin{remark}\label{Rem2}
Since the sets of operators $$\{\af_s, \af_s^{\dagger}, N_s, I_s\}\quad\mbox{and}\quad \{\widehat{\af}_\ns, \widehat{\af}_\ns^{\dagger}, N_\ns, I_\ns\}$$ are defined by the same relations (\ref{b-1}) and satisfy the same commutation relations (Theorem \ref{Th1}), in \cite{BD2, BD3} the authors claimed that the algebras $\As$ and $\mathfrak{A}_{\ns}$ coincide and therefore Theorem (\ref{Th2}) is valid for the algebra $\mathfrak{A}_\ns$ as well. According to the definition of the algebra $\mathfrak{A}_\ns$ given in \cite{BD2, BD3} the authors claim is true. However, in the symmetric case the operator $Q_s$ is an operator of multiplication by an independent variable while the operator $\widehat{Q}_{\ns}$ does not hold this property.
\end{remark}
\subsection{Oscillator-like algebras}
 A particular kind of deformation to the creation operator of the algebra $\mathfrak{A}_{WH}$ is proposed in \cite{Bec1} and then used, for example, in \cite{Bec2, Rok}. In \cite{Bec1} the authors deformed the creation operator as
\begin{equation}\label{defo}
\af_{\lambda}^{\dagger}=\af^{\dagger}+\lambda I;\quad\lambda\in\R,
\end{equation}
where $\lambda$ is a real continuous parameter, without changing the annihilation operator. Even after the deformation the commutation relations of the Weyl-Heisenberg algebra remain unchanged, that is,
\begin{equation}\label{wh}
[\af, \af_{\lambda}^{\dagger}]=I,\quad [N_{\lambda},\af]=-\af,\quad [N_{\lambda},\af_{\lambda}^{\dagger}]=\af_{\lambda}^{\dagger},
\end{equation}
where $N_{\lambda}=\af_{\lambda}^{\dagger}\af$. However, $(\af_{\lambda}^{\dagger})^{\dagger}\not=\af$ and $N_{\lambda}^{\dagger}\not=N_{\lambda}$. In \cite{Bec1, Bec2} the authors called the algebra generated by $\{\af, \af_{\lambda}^{\dagger}, N_{\lambda}, I\}$ an oscillator-like algebra. In the following we propose a different, however similar to (\ref{defo}), deformation to the creation operator of the symmetric case and obtain an oscillator-like algebra. In fact, in (\ref{defo}) we replace the real parameter $\lambda$ by the $a_n$'s of the recurrence  relation (\ref{nRe}) and the operator $\af$ by the operator $\af_s$.\\

Let $\{a_n\}_{n=0}^{\infty}$, $\{b_n\}_{n=0}^{\infty}$  be the sequence of real numbers appearing in the four term recurrence relation (\ref{nRe}) and $\{\Psi_n(x)\}_{n=0}^{\infty}$ be an orthonormal polynomials system satisfying the recurrence relation (\ref{nRe}). Let $$\mathcal{H}=\overline{\mbox{span}}\{\Psi_n(x)~\vert~n=0,1,2,\cdots\}.$$ In $\mathcal{H}$, define the operator $D $ by
\begin{equation}\label{D}
D\Psi_n=\sqrt{2}a_n\Psi_n,\quad n=0,1,...
 \end{equation}
Also on $\mathcal{H}$ define the operators
\begin{equation}
A=\af_s,\quad A^\dagger=\af_s^\dagger+ D,\quad N=N_s
 \end{equation}
and $I$, the identity operator on $\mathcal{H}$.
Then their actions take the form
\begin{equation}\label{A-1}
A^\dagger\Psi_n=\sqrt{2}b_n \Psi_{n+1}+\sqrt{2}a_n \Psi_{n},\quad A\Psi_n=\sqrt{2}b_{n-1} \Psi_{n-1}.
 \end{equation}
We also have
\begin{equation}\label{wh1}
[A, A^{\dagger}]=2(B(N+I)-B(N))+Af(N),\quad [N,A]=-A,\quad [N,A^{\dagger}]=\af_s^{\dagger},
\end{equation}
where $f(N)$ is a function of the self-adjoint operator $N$ acting as $$f(N)\Psi_n(x)=\sqrt{2}(a_n-a_{n-1})\Psi_n(x).$$ Once again $(A^{\dagger})^{\dagger}\not=A$.
Let $\mathfrak{A}$ be the oscillator-like algebra generated by $\{I, A, A^{\dagger}, N\}$.
 Now, as usual, let the position operator be
\begin{equation}\label{thepos}
Q=\frac{A+A^{\dagger}}{\sqrt{2}}.
\end{equation}
\begin{proposition}\label{pro1}
The operator $Q$ in (\ref{thepos}) is an operator of the multiplication by an independent variable. That is,
\begin{equation}\label{inde}
Q\Psi_n(x)=x\Psi_n(x).
\end{equation}
\end{proposition}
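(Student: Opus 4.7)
The plan is to verify the claim by direct computation on the basis vectors, using the explicit actions of $A$ and $A^\dagger$ given in (\ref{A-1}) together with the recurrence relation (\ref{nRe}) satisfied by the orthonormal polynomials $\{\Psi_n(x)\}_{n=0}^\infty$. Since $Q$ is defined as $(A+A^\dagger)/\sqrt{2}$ and the right-hand side of (\ref{nRe}) is a three-term combination of $\Psi_{n-1}$, $\Psi_n$, $\Psi_{n+1}$ with coefficients $b_{n-1}$, $a_n$, $b_n$, the identification should fall out by matching coefficients.

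First, I would apply $A+A^\dagger$ to a basis element $\Psi_n(x)$. Using (\ref{A-1}), one obtains
\begin{equation*}
(A+A^\dagger)\Psi_n(x)=\sqrt{2}\,b_{n-1}\Psi_{n-1}(x)+\sqrt{2}\,a_n\Psi_n(x)+\sqrt{2}\,b_n\Psi_{n+1}(x).
\end{equation*}
Dividing by $\sqrt{2}$ gives
\begin{equation*}
Q\Psi_n(x)=b_{n-1}\Psi_{n-1}(x)+a_n\Psi_n(x)+b_n\Psi_{n+1}(x),
\end{equation*}
which by (\ref{nRe}) is precisely $x\Psi_n(x)$. Since this holds for every $n\geq 0$ (with the convention $b_{-1}=0$ taking care of the boundary case $n=0$), and since $\{\Psi_n(x)\}_{n=0}^\infty$ is a basis of $\mathcal{H}$, the identity $Q\Psi=x\Psi$ holds on the dense span, establishing (\ref{inde}).

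There is essentially no main obstacle here; the proposition is a straightforward consistency check designed precisely to show that the deformation $A^\dagger=\af_s^\dagger+D$ restores the link between the ladder operators and multiplication by the independent variable that, as pointed out in Remark \ref{Rem2}, was lost in the construction of $\widehat{\af}_\ns$ and $\widehat{\af}_\ns^\dagger$ in the non-symmetric case. The only subtlety worth noting is that the inclusion of the diagonal term $D$ (contributing $\sqrt{2}\,a_n\Psi_n$ to $A^\dagger\Psi_n$) is exactly what supplies the missing $a_n\Psi_n$ term in the recurrence (\ref{nRe}); without it, $Q$ would fail to be multiplication by $x$ whenever $a_n\neq 0$.
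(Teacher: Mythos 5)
Your proof is correct and follows exactly the paper's own argument: apply $A$ and $A^\dagger$ to $\Psi_n$ via (\ref{A-1}), divide by $\sqrt{2}$, and identify the resulting three-term expression with the right-hand side of the recurrence (\ref{nRe}), with $b_{-1}=0$ handling $n=0$. No gaps and no meaningful difference in method.
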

\begin{proof}
Since the sequences $\{a_n\},~\{b_n\}$ and the normalized polynomials system $\{\Psi_n(x)\}$ satisfy the four term recurrence relation (\ref{nRe}), we have
\begin{eqnarray*}
Q\Psi_n(x)
&=&\frac{1}{\sqrt{2}}(A+A^{\dagger})\Psi_n(x)\\
&=&\frac{1}{\sqrt{2}}(\sqrt{2}b_{n-1}\Psi_{n-1}(x)+ \sqrt{2} a_{n}\Psi_{n}(x)  +\sqrt{2}b_n\Psi_{n+1})\\
&=&b_{n-1}\Psi_{n-1}(x)+ a_{n}\Psi_{n}(x) +b_n\Psi_{n+1}(x)\\
&=&x\Psi_n(x).
\end{eqnarray*}
\end{proof}
In this regard, unlike the algebra $\mathfrak{A}_{\ns}$, $\mathfrak{A}$ is the oscillator-like algebra most closely associated with the orthogonal polynomials system satisfying the four term recurrence relation (\ref{nRe}).
\section{Main results}
In this section we shall provide  necessary and sufficient conditions on $a_n$ and $b_n$ of the four term recurrence relation (\ref{nRe}) for the oscillator-like algebra, $\mathfrak{A}$, to be of finite dimension. The following theorem is the main result of this manuscript.
\begin{theorem}\label{t1}
The generalized oscillator-like algebra $\mathfrak{A}$ is of finite dimension if and only if
\begin{equation}\label{T1}
b^2_n=\alpha_2 n^2+\alpha_1 n+\alpha_0 \:\:and\:\: a_n=\beta_1 n+\beta_0,  \: with\:\:\alpha_1=\alpha_2+\alpha_0,
\end{equation}
where $\alpha_0, \alpha_1, \alpha_2, \beta_0, \beta_1\in\R$.
\end{theorem}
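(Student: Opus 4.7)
The plan is to treat the two implications separately, using the weight grading of $\mathcal{A}$ induced by $\mathrm{ad}_N$ as the main structural tool and reducing the condition on $b_n^2$ to the symmetric case handled by Theorem \ref{Th2}. For sufficiency I would verify that under (\ref{T1}) the four operators $\{I,N,A,A^{\dagger}\}$ already span a Lie subalgebra of dimension four. Linearity of $a_n$ makes the diagonal operator $D$ in (\ref{D}) equal to $\sqrt{2}\,\beta_1 N+\sqrt{2}\,\beta_0 I$, so $\af_s^{\dagger}=A^{\dagger}-D\in\mathrm{span}\{I,N,A^{\dagger}\}$ and $[N,A^{\dagger}]=\af_s^{\dagger}$ stays in the span. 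A direct computation of
\[
[A,A^{\dagger}]\Psi_n=2(b_n^2-b_{n-1}^2)\Psi_n+2b_{n-1}(a_n-a_{n-1})\Psi_{n-1},
\]
combined with the quadratic form of $b_n^2$ and $a_n-a_{n-1}=\beta_1$, then expresses $[A,A^{\dagger}]$ as $4\alpha_2 N+2(\alpha_1-\alpha_2)I+\sqrt{2}\,\beta_1 A$, still in the span, so $\dim\mathcal{A}=4$.

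For necessity I would assume $\dim\mathcal{A}<\infty$. Since $\af_s^{\dagger}=[N,A^{\dagger}]\in\mathcal{A}$, the diagonal operator $D=A^{\dagger}-\af_s^{\dagger}$ also lies in $\mathcal{A}$, and the Lie subalgebra of $\mathcal{A}$ generated by $\{I,N,A,\af_s^{\dagger}\}$ is the symmetric-case oscillator algebra of Definition \ref{DD1} for the same sequence $\{b_n\}$; being a subalgebra of $\mathcal{A}$, it is finite-dimensional, so Theorem \ref{Th2} forces $b_n^2=\alpha_2 n^2+\alpha_1 n+\alpha_0$. To extract the linearity of $a_n$ I would decompose $\mathcal{A}=\bigoplus_{k\in\mathbb{Z}} V_k$ into eigenspaces of $\mathrm{ad}_N$, so that each $V_k$ is finite-dimensional and $V_k=0$ for $|k|$ large, and then establish by induction on $k$ the two coefficient formulas
\[
(\mathrm{ad}_A)^k D\,\Psi_n=(\sqrt{2})^{k+1}\,b_{n-1}b_{n-2}\cdots b_{n-k}\,\Delta^k a_n\,\Psi_{n-k}\in V_{-k}
\]
with $\Delta^k$ the $k$-th backward difference, and
\[
(\mathrm{ad}_D)^k A\,\Psi_n=(-1)^k(\sqrt{2})^{k+1}\,b_{n-1}\,\delta_n^k\,\Psi_{n-1}\in V_{-1},\quad \delta_n=a_n-a_{n-1}.
\]
Since $V_{-k}=0$ for $k$ large and every $b_j>0$, the first formula forces $\Delta^k a_n=0$ eventually, so $a_n$ must be a polynomial in $n$ of bounded degree. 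The finite dimension of $V_{-1}$ applied to the second formula produces a nontrivial relation $\sum_k\lambda_k\delta_n^k=0$ valid for all $n\geq 1$, so $\delta_n$ takes only finitely many values; a polynomial taking only finitely many values is constant, hence $\delta_n\equiv\beta_1$ and $a_n=\beta_1 n+\beta_0$. The constraint $b_{-1}=0$ is built into (\ref{nRe}).

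The main obstacle I anticipate is the clean inductive derivation of the two coefficient formulas above, together with the verification that the leading constants $(\sqrt{2})^{k+1}$ never degenerate and that edge effects at $n\leq k$ do not produce spurious operators outside the weight spaces claimed. A secondary subtlety worth flagging is that the subalgebra of $\mathcal{A}$ generated by $\{I,N,A,\af_s^{\dagger}\}$ must genuinely be identified with the $\mathcal{A}_s$ of Definition \ref{DD1}; this is legitimate because Theorem \ref{Th2} rests only on the algebraic relations (\ref{b-1}) and (\ref{Com}) and is insensitive to the origin of the sequence $\{b_n\}$.
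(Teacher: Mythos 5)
Your proposal is correct, and most of it coincides with the paper's own proof: your sufficiency computation of $[N,A]$, $[N,A^{\dagger}]=A^{\dagger}-\sqrt{2}\beta_1N-\sqrt{2}\beta_0I$ and $[A,A^{\dagger}]=4\alpha_2N+2(\alpha_1-\alpha_2)I+\sqrt{2}\beta_1A$ is the paper's Step~3; the inclusion $\mathcal{A}_s\subseteq\mathcal{A}$ via $\af_s^{\dagger}=[N,A^{\dagger}]$ followed by Theorem \ref{Th2} is its Step~1 (the paper makes the same identification of the subalgebra with $\mathcal{A}_s$ that you flag and justify); and your operators $(\mathrm{ad}_A)^kD$ are exactly the paper's $D_k$ (whose displayed action shows they are iterated brackets with $A$, despite the notation $[A^{\dagger},\cdot]$ there), yielding polynomiality of $a_n$ as in the first half of Step~2. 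Where you genuinely depart from the paper is in ruling out $\deg a_n\geq 2$: the paper takes $p=\deg a_n$, forms $W_0=N^{p}+\gamma_{p-1}N^{p-1}+\cdots+\gamma_1N^2\in\mathcal{A}$ and imports the page-8 computation of \cite{Hon} to produce the infinite family $(\af_s^{\dagger})^{p+m(p-1)}\in\mathcal{A}$, whereas you use the second family $(\mathrm{ad}_D)^kA$, all lowering by exactly one level, whose coefficients carry the powers $\delta_n^k$ with $\delta_n=a_n-a_{n-1}$; finite dimensionality forces a nontrivial relation $q(\delta_n)=0$ for all $n\geq1$ (and $q$ cannot be a nonzero constant, since that would mean $A=0$ while $b_n>0$), so $\delta_n$ takes finitely many values and, being a polynomial by the first part, is constant. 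This alternative is more self-contained and elementary, replacing the external calculation from \cite{Hon} by a short linear-dependence argument in the weight $-1$ space, at the mild cost of needing the polynomiality of $a_n$ first, which you have. One small overclaim: you assert a decomposition $\mathcal{A}=\bigoplus_kV_k$ into $\mathrm{ad}_N$-eigenspaces, which as stated would require diagonalizability of $\mathrm{ad}_N$ on $\mathcal{A}$; you never need it, since all you use is that nonzero $\mathrm{ad}_N$-eigenvectors with distinct eigenvalues are linearly independent (so only finitely many weights occur and the weight $-1$ eigenspace is finite dimensional), and with that reading your argument is complete.
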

As a corollary we state the following result.
\begin{corollary}\label{c1}
If the oscillator-like algebra  $\mathfrak{A}$ is of finite dimension, then the dimension of $\mathfrak{A}$ is four.
\end{corollary}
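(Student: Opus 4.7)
\medskip
\noindent The plan is to exploit the natural grading of operators on $\mathcal H$ by \emph{shift index}: an operator has shift $k$ if it sends $\Psi_n$ to a scalar multiple of $\Psi_{n+k}$. The basic generators split cleanly along this grading: $A$ has shift $-1$, the raising piece $\af_s^{\dagger}$ has shift $+1$, while $N$, $I$, and $D:=A^{\dagger}-\af_s^{\dagger}$ have shift $0$. Lie brackets respect the grading, so closure of $\mathcal A$ under the bracket can be analyzed piece by piece.

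For the sufficient direction, assume $b_n^2=\alpha_2 n^2+\alpha_1 n+\alpha_0$ and $a_n=\beta_1 n+\beta_0$, and verify that $\mathcal L:=\operatorname{span}\{I,A,A^{\dagger},N\}$ is already closed under the bracket. Three checks suffice. First, $[N,A]=-A$. Second, $[N,A^{\dagger}]=\af_s^{\dagger}=A^{\dagger}-D$; since $a_n$ is linear, $D=\sqrt{2}(\beta_1 N+\beta_0 I)\in\mathcal L$, so $[N,A^{\dagger}]\in\mathcal L$. Third, by (\ref{wh1}), $[A,A^{\dagger}]=2(B(N+I)-B(N))+2Af(N)$; quadraticity of $b_n^2$ makes $B(N+I)-B(N)$ linear in $N$, and linearity of $a_n$ makes $f(N)$ a scalar multiple of $I$, so $[A,A^{\dagger}]\in\mathcal L$. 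Hence $\mathcal A=\mathcal L$, which is $4$-dimensional by obvious linear independence; this also proves Corollary~\ref{c1}.

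For the necessary direction, assume $\dim\mathcal A<\infty$. The condition on $b_n^2$ comes essentially for free: $\af_s^{\dagger}=[N,A^{\dagger}]\in\mathcal A$, so the sub-Lie-algebra generated by $\{I,A=\af_s,\af_s^{\dagger},N\}$ is the symmetric-case oscillator algebra $\As$ of Definition~\ref{DD1}; being a subalgebra of the finite-dimensional $\mathcal A$ it is finite-dimensional, and Theorem~\ref{Th2} forces $b_n^2$ to be quadratic. For the condition on $a_n$, consider the chain $E_0:=A$, $E_{k+1}:=[E_k,D]$, all of which lie in $\mathcal A$. A one-line induction shows
\[
 E_k\Psi_n=(\sqrt{2})^{k+1}\,b_{n-1}\,c_n^{\,k}\,\Psi_{n-1},\qquad c_n:=a_n-a_{n-1}.
\]
Finite-dimensionality forces a non-trivial dependence $\sum_{k=0}^{K}\lambda_k E_k=0$, which (since $b_{n-1}>0$) reads $p(c_n)=0$ for every $n\ge 1$ with $p(x)=\sum\lambda_k x^k\ne 0$; thus $c_n$ takes values in the finite root set of $p$.

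The main obstacle is promoting ``finitely many values'' to ``constant''. For this I would combine the preceding relation with the parallel iterations in the shift-$0$ and shift-$(-2)$ pieces: $G_k:=[\af_s^{\dagger},E_k]$ produces the telescoping identity $b_n^2\,q(c_{n+1})=b_{n-1}^2\,q(c_n)$, and the boundary condition $b_{-1}=0$ collapses the telescope to $q(c_n)=0$ for every $n\ge 1$; meanwhile $[A,E_k]$ lands in shift $-2$ with coefficient $b_{n-1}b_{n-2}(c_n^{\,k}-c_{n-1}^{\,k})$, furnishing a companion relation $P(c_n)=P(c_{n-1})$ for a polynomial $P$ read off from a dependence among these. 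Choosing $p,q,P$ together so that they separate any finite value set of $c_n$ rules out non-constant scenarios (period-$2$ alternation, which survives any single graded piece, is the obvious pathology one must exclude), forcing $c_n$ constant and hence $a_n=\beta_1 n+\beta_0$. This joint-graded rigidity step is the delicate part; the rest is bookkeeping.
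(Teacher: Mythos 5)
Your three bracket computations ($[N,A]=-A$, $[N,A^{\dagger}]=A^{\dagger}-\sqrt{2}\beta_1N-\sqrt{2}\beta_0I$, and $[A,A^{\dagger}]$ a linear combination of $N$, $I$, $A$) are exactly the paper's Step-3, which is all the paper invokes for Corollary~\ref{c1}: once Theorem~\ref{t1} supplies the stated forms of $a_n$ and $b_n$, the algebra closes on $\mathrm{span}\{I,A,A^{\dagger},N\}$ and hence has dimension four, so your proposal is correct and takes essentially the same route. The additional re-derivation of the necessity direction (quadratic $b_n^2$, linear $a_n$) --- including the admittedly unfinished ``joint-graded rigidity'' step forcing $a_n-a_{n-1}$ to be constant --- is the content of Theorem~\ref{t1} itself, which the corollary may simply cite, so that sketch is not needed here and its gap does not affect the corollary.
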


\subsection{Proof of Theorem \ref{t1}}
We  execute the proof in three steps. In step-1 we prove that if ${\text{dim}}(\mathfrak{A})<\infty$ then $b_n$ is of second degree in $n$. In step-2 we show that if ${\text{dim}}(\mathfrak{A})<\infty$ then $a_n$ is of degree one in $n$. In step-3 we prove the converse of the theorem.\\
{\bf Step-1:}
From (\ref{wh1}) we have
\begin{equation}
[N, A^{\dagger}]\Psi_n=\af_s^{\dagger}\Psi_n,\quad n=0,1,2,\cdots.
\end{equation}
Therefore, $\af_s^{\dagger}\in \mathfrak{A}$ and hence
\begin{equation}
\mathfrak{A}_s\subseteq \mathfrak{A}.
\end{equation}
Thus, if ${\text{dim}}(\mathfrak{A})<\infty,$ then ${\text{dim}}(\mathfrak{A}_s)<\infty$. Therefore, according to \cite{Hon}, $b^2_n$ must be of the form
\begin{equation}
b^2_n=\alpha_2 n^2+\alpha_1 n+\alpha_0.
\end{equation}
{\bf Step-2:}
Since  $\af_s^{\dagger}\in \mathfrak{A},$ then  $D=A^{\dagger}-\af_s^{\dagger}\in \mathfrak{A}.$
Define a family of operators as follows:
\begin{equation}
D_1=[A^{\dagger}, D],\quad D_2=[A^{\dagger}, D_1], \quad \text{and}\quad D_k=[A^{\dagger}, D_{k-1}], \: k=3,4,\cdots
\end{equation}
and
\begin{equation}
d^{(1)}_n=a_n-a_{n-1},\:\: d^{(2)}_n=d^{(1)}_n-d^{(1)}_{n-1}\quad \text{and}\quad d^{(k)}_n=d^{(k-1)}_n-d^{(k-1)}_{n-1}, \: k=3,4,\cdots.
\end{equation}
Then, by induction, we have
\begin{equation}
D_k\Psi_n=\left(\sqrt{2}  \right)^{k+1}\prod_{i=1}^kb_{n-i}d_n^{(k)}\Psi_{n-k}.
\end{equation}
We can see that for any $\Psi_n(x),$ with $n\geq k,\: D_k$ is lowering the level of $\Psi_n(x)$ by $k-$stages if there is no $k$ such that $d_n^{(k)}=\mbox{constant}$, for $n=0,1,2,\cdots .$
Therefore, if ${\text{dim}}(  \mathfrak{A} )<\infty$ then there exists $k$ such that $d_n^{(k)}=\mbox{constant}$, for $n=0,1,2,\cdots.$
Let
\begin{equation}
p=\inf\{ k~\vert~ d_n^{(k)}=\mbox{constant},~~\mbox{for}\: n=0,1,2,\cdots\}.
\end{equation}
Then it can be easily shown that the $a_n$ has the form
\begin{equation}\label{an}
a_n=\sum_{i=0}^{p}\theta_i n^i=  \theta_{p} n^{p}+\theta_{p-1} n^{p-1}.
+...+\theta_1 n+\theta_0,
\end{equation}
where $\theta_p,\cdots,\theta_0$ are real constants. Hence, $D$ can be seen as 
\begin{equation}\label{DD}
D=\sqrt{2}\sum_{i=0}^{p}\theta_i N^i=\sqrt{2} \left( \theta_{p} N^{p}+\theta_{p-1} N^{p-1} +\cdots+\theta_1 N+\theta_0I\right)\in  \mathfrak{A}.
\end{equation}
Now let us show that ${\text{dim}}( \mathfrak{A})=\infty$ if $p\geq 2$.
Since $D,\:N,\:I\in  \mathfrak{A}$. Eq. (\ref{DD}) implies that $$\sqrt{2}(\theta_pN^{p}+\theta_{p-1} N^{p-1} +...+\theta_2 N^2)\in  \mathfrak{A}.$$
By rescaling, we get
\begin{equation}\label{W^0}
W_0= N^{p}+\gamma_{p-1} N^{p-1} +...+\gamma_2 N^2\in  \mathfrak{A},
\end{equation}
where $\gamma_i=\frac{\theta_i}{\sqrt{2}\theta_{p}},$ for $i=2,\cdots,p-1.$
The following commutation relations can easily be computed:
\begin{equation}\label{q1}
[N^2,\af_s^{\dagger}]=2\af_s^{\dagger}N+\af_s^{\dagger}
\end{equation}
\begin{equation}\label{q2}
[N^3,\af_s^{\dagger}]=3\af_s^{\dagger}N^2+3\af_s^{\dagger}N+\af_s^{\dagger}
\end{equation}
\begin{equation}\label{q3}
[N^4,\af_s^{\dagger}]=4\af_s^{\dagger}N^3+6\af_s^{\dagger}N^2+4\af_s^{\dagger}N+\af_s^{\dagger}.
\end{equation}
That is, in general we have
\begin{equation}\label{q4}
[N^k,\af_s^{\dagger}]= \sum_{i=1}^{k}c^i_k\af_s^{\dagger}N^{k-i},\quad\text{where}~~c_k^i\in\mathbb{R}.
\end{equation}
Using (\ref{W^0}) and (\ref{q4}), we have 
\begin{equation}\label{q10}
\mathcal W^{+}:= [W_0,\af_s^{\dagger}]=\sum_{i=2}^{p}\sum_{j=1}^{i}\gamma_ic^{j}_{i} \af_s^{\dagger}N^{i-j},
\end{equation}
Since $W_0\in\A$, we have $\mathcal W^{+} \in \mathfrak{A}.$ 
After $(p-2)$-iterations it can be seen that
\begin{eqnarray}\label{rq1}
\mathcal W^+_{(p-1)}&:=&\left[\af_s^{\dagger}...\left[\af_s^{\dagger},\left[\af_s^{\dagger}, \mathcal W^{+}\right] \right]... \right]\\
&=&(-1)^{p}p!\af_s^{\dagger (p-1)}N\nonumber
+f(p)\af_s^{\dagger (p-1)},
\end{eqnarray}
where $f(p)$ is some function of $p.$
Since $\mathcal W^+\in\A$, we get $\mathcal W^+_{(p-1)}\in\A$. 
Further, the following commutation relation can easily be verified by induction
\begin{equation}\label{cq2}
\left[\af_s^{\dagger m},\af_s^{\dagger (p-1)}N\right]= -m\af_s^{\dagger (p-1+m)},\:\: m\geq 1.
\end{equation}
Now, (\ref{rq1}) and (\ref{cq2}) imply that
\begin{eqnarray}\label{cq3}
\left[\af_s^{\dagger},\mathcal W^+_{(p-1)}\right]&=&(-1)^{p}p!\left[\af_s^{\dagger},\af_s^{\dagger (p-1)}N\right]\nonumber\\
&=&(-1)^{p+1}p!\af_s^{\dagger (p)}.
\end{eqnarray}
Thereby, since $\af_s^{\dagger}, \mathcal W^+_{(p-1)}\in\A$, we see $\af_s^{\dagger (p)}\in\mathfrak{A}.$
Again using the relation (\ref{cq2}), we get
\begin{eqnarray}\label{cq4}
\left[\af_s^{\dagger (p)},\mathcal W^+_{(p-1)}\right]&=&(-1)^{p}p!\left[\af_s^{\dagger (p)},\af_s^{\dagger (p-1)}N\right]\nonumber\\
&=& (-1)^{p+1}p!p\af_s^{\dagger (2p-1)},\:\: m\geq 1;
\end{eqnarray}
Thereby, $\af_s^{\dagger {(2p-1)}}  = \af_s^{\dagger {p+(p-1)}}\in\mathfrak{A}.$
By iteration, we can prove that

\begin{equation}
(\af_s^{\dagger})^{p+m(p-1)}\in \mathfrak{A},\quad \mbox{for every}\quad m=1,2,3,\cdots.
\end{equation}
Further, for $p\geq 2$, the operators $(A^{\dagger})^{p+m(p-1)}$ are new elements of $\mathfrak{A}$ for every $m\geq 1$.
Therefore $\mathfrak{A}$ is of infinite dimension. That is, we have arrived at the conclusion that if ${\text{dim}}(\mathfrak{A})<\infty$, then $p  <2.$ Thus, from (\ref{an}), $a_n=\theta_1n+\theta_0$. Hence, if ${\text{dim}}(\mathfrak{A})<\infty$, then $b_n$ and $a_n$ are second and  first degree polynomials in $n$ respectively.\\
{\bf Step-3:~~}
Let us prove that if $p<2$, then the algebra $\mathfrak{A}$ is of finite dimension. In this regard, for $n\geq 0$, assume that
\begin{equation}
b^2_n=\alpha_2 n^2+\alpha_1 n+\alpha_0 \quad and\quad a_n=\beta_1 n+\beta_0, \: with\:\:b_{-1}=0.
\end{equation}
Then from (\ref{wh1}) we have
\begin{eqnarray}\nonumber
[N, A^\dagger]\Psi_n&=&NA^\dagger \Psi_n -A^\dagger N\Psi_n\\
&=&\sqrt{2}b_n\Phi_{n+1}\\\nonumber
&=&\sqrt{2}b_n\Psi_{n+1}(x)+\sqrt{2}(\beta_1 n+\beta_0)\Psi_n-\sqrt{2}(\beta_1 n+\beta_0)\Psi_n\\\nonumber
&=&A^\dagger\Psi_n-\sqrt{2}(\beta_1 n+\beta_0)\Psi_n \\\nonumber
&=&A^\dagger\Psi_n-\sqrt{2}\beta_1N\Psi_n-\sqrt{2}\beta_0\Psi_n\\\nonumber
&=&\left(A^\dagger-\sqrt{2}\beta_1N-\sqrt{2}\beta_0I\right)\Psi_n.\nonumber
\end{eqnarray}
Therefore,
\begin{equation}
[N, A^\dagger]=A^\dagger-\sqrt{2}\beta_1N-\sqrt{2}\beta_0I.
\end{equation}
Further
\begin{eqnarray}
[N, A]\Psi_n&=&\sqrt{2}(n-1)b_{n-1}\Psi_{n-1}-\sqrt{2}nb_{n-1}\Psi_{n-1}\\\nonumber
&=& -\sqrt{2}b_{n-1}\Psi_{n-1}\\\nonumber
&=&-A\Psi_n.\nonumber
\end{eqnarray}
Therefore,
\begin{equation}
[N, A]=-A.
\end{equation}
Also
\begin{eqnarray}
[A, A^\dagger]\Psi_n&=&2\left[2\alpha_2n+(\alpha_1-\alpha_2)   \right]\Psi_{n}+2\beta_1b_{n-1}\Psi_{n-1}\\ \nonumber
&=&2\left[2\alpha_2N+(\alpha_1-\alpha_2)I   \right]\Psi_{n}+\sqrt{2}\beta_1A\Psi_{n}\\\nonumber
&=&\left[4\alpha_2N+2(\alpha_1-\alpha_2)I   +\sqrt{2}\beta_1A\right]\Psi_{n}.\nonumber
\end{eqnarray}
Therefore,
\begin{equation}
[A, A^\dagger]=4\alpha_2N+2(\alpha_1-\alpha_2)I   +\sqrt{2}\beta_1A.
\end{equation}
That is, in this case, all the commutation relations are linear combinations of the operators $A, A^{\dagger}, N$ and $I$. Therefore, the algebra $\mathfrak{A}$ is closed under the bracket $[~\cdot~,~\cdot~]$.
Hence $\mathfrak{A}$ is of finite dimension.
\subsection{Proof of Corollary \ref{c1}}
 The proof follows from step-3 of the above proof.
\section{Some examples}
In this section, as examples, we discuss the dimensions of oscillator-like algebras associated with Laguerre and Jacobi polynomials. We borrow the details of these polynomials from \cite{BD2, BD3}. For an enhanced explanation we refer the reader to \cite{BD2, BD3} and the references therein.
\subsection{Laguerre polynomials}
 The Laguerre polynomials are defined by
$$L_n^{\alpha}(x)=\frac{\alpha+1}{n!}~_1F_1(-n,\alpha+1;x).$$
These polynomials are orthogonal in the Hilbert space $\mathcal{H}_L=L^2(\R_+, x^{\alpha}e^{-x}dx)$. The normalized polynomials take the form
$$\Psi_n(x)=d_n^{-1}L_n^{\alpha}(x)\mbox{with}\quad d_n=\sqrt{\frac{\Gamma(n+\alpha+1)}{n!}};\quad n\geq 0.$$
These normalized polynomials satisfy the non-symmetric recurrence relation (\ref{nRe}) with
$$b_n=-\sqrt{(n+1)(n+\alpha+1)},\quad a_n=2n+\alpha+1.$$
Therefore, according to theorem (\ref{t1}), the related oscillator-like algebra, $\mathfrak{A}$, is of finite dimension. In the non-symmetric case, from (\ref{b-1}), (\ref{b3}) and (\ref{Com}), we can see that
$$[\widehat{\af}_{n-s}, \widehat{\af}_{n-s}^{\dagger}]=4N_{\ns}+2(\alpha+1)I_{\ns}\in\A_{\ns}.$$
Therefore all the commutators are linear combinations of $\widehat{\af}_{ns}, \widehat{\af}_{\ns}^{\dagger}, N_{\ns}$ and $I_{\ns}$. That is, the commutator operation is closed in $\A_{\ns}$. It is a finite dimensional algebra, however the algebra is not isomorphic to $\A_{WH}$. In the oscillator-like case, from (\ref{wh1}) we can see that
$$[N,A^{\dagger}]=A^{\dagger}-2\sqrt{2}N-\sqrt{2}(\alpha+1)I\in\A\quad\text{and}$$
$$[A, A^{\dagger}]=4N+2\sqrt{2}A+2(\alpha+1)I\in\A.$$
Hence the algebra $\A$ is closed under the commutator operation and therefore finite dimensional. Once again the algebra $\A$ is not isomorphic to $A_{WH}$.
\subsection{Jacobi polynomials}
The Jacobi polynomials
$$P_n^{(\alpha,\beta)}(x)=\frac{(\alpha+1)_n}{n!}~_2F_1(-n,n+\alpha+\beta; \alpha+1; \frac{1-x}{2})$$
are orthogonal in the Hilbert space 
$L^2([-1,1], (d_0(\alpha,\beta))^{-2}(1-x)^{\alpha}(1+x)^{\beta}dx)$, where
$$d_0^2(\alpha, \beta)=2^{\alpha+\beta+1}\frac{\Gamma(\alpha+1)\Gamma(\beta+1)}{\Gamma(\alpha+\beta+2)}.$$
The normalized polynomials $\{\Psi_n(x)\}_{n=0}^{\infty}$ are defined by the formula $\Psi_n(x)=d_0d_n^{-1}P_n^{(\alpha,\beta)}(x)$, where
$$d_n^2=2^{\alpha+\beta+1}\frac{\Gamma(n+\alpha+1)\Gamma(n+\beta+1)}
{\Gamma(n+\alpha+\beta+1)n!(2n+\alpha+\beta+1)};\quad n>0.$$
Then the non-symmetric recurrence relation (\ref{nRe}) is satisfied with
\begin{eqnarray*}
a_n&=& \frac{\beta^2-\alpha^2}{(2n+\alpha+\beta)(2n+\alpha+\beta+2)},\\
b_n&=&2\sqrt{\frac{(n+1)(n+1+\alpha)(n+1+\beta)(n+\alpha+\beta+1)}
{(2n+\alpha+\beta+1)(2n+\alpha+\beta+2)^2(2n+\alpha+\beta+3)}}
\end{eqnarray*}
Therefore, according to theorem (\ref{t1}), the corresponding oscillator-like algebra, $\mathfrak{A}_J$, is of infinite dimension. Here also we can compute the commutator $[A, A^{\dagger}]$ and see that the algebra does not close under the commutator operation.
\section{conclusion}
In this paper, we have discussed the dimensions of oscillator-like algebras induced by orthogonal polynomials satisfying a non-symmetric four term recurrence relation. Further, we have also responded to the claims made in \cite{BD2, BD3} about our previous paper \cite{Hon}.

In \cite{BD2} the authors have presented some remarks about the dimensions of oscillator algebras associated with two dimensional orthogonal polynomials such as the normalized 2D-Hermite polynomials $H_{n,m}(z,\overline{z})$ which satisfy the three term recurrence relation \cite{Wu1, GA}
\begin{equation}\label{He1}
zH_{m,n}(z,\overline{z})=\sqrt{m+1}H_{m+1,n}(z,\overline{z})+\sqrt{n}H_{m,n-1}(z,\overline{z}).
\end{equation}
It may be interesting to consider a detail study of the dimensions of oscillator algebras arising from 2D orthogonal polynomials satisfying three-term and four-term recurrence relations.

Further, there are several deformations to 1D and 2D orthogonal polynomials, for example see \cite{key3, qd, Bur, BH, FV, IS}. The theory developed in \cite{Hon, BD2, BD3} or in this manuscript does not directly apply to the deformed algebras associated with these deformed orthogonal polynomials. 

\end{document}